\newtheorem{thm}{Theorem}
\newtheorem{lemma}{Lemma}[section]
\newtheorem{corollary}{Corollary}
\DeclareMathOperator{\tr}{Tr}
\DeclareMathOperator{\rank}{rank}
\DeclarePairedDelimiter\floor{\lfloor}{\rfloor}
\begin{document}


\title{Optical quantum communication complexity in the simultaneous-message-passing model}


\author{Ashutosh Marwah}
\email[]{ashutosh.marwah@outlook.com}
\affiliation{Institute for Quantum Computing and Department of Physics and Astronomy, University of Waterloo \\ 
D\'epartement d'informatique et de recherche op\'erationnelle, Universit\'e de Montr\'eal}

\author{Dave Touchette}
\email[]{touchette.dave@gmail.com}
\affiliation{Institute for Quantum Computing and Department of Combinatorics and Optimization, University of Waterloo, Perimeter Institute for Theoretical Physics, Departement d'informatique and Institut Quantique, Universit\'e de Sherbrooke}


\date{\today}

\begin{abstract}
The communication cost of a classical protocol is typically measured in terms of the number of bits communicated for this determines the time required for communication during the protocol. Similarly, for quantum communication protocols, which use finite-dimensional quantum states, the communication cost is measured in terms of the number of qubits communicated. However, in quantum physics, one can also use infinite-dimensional states, like optical quantum states, for communication protocols. Communication cost measures based on counting the (equivalent) number of qubits transmitted during communication cannot be directly used to measure the cost of such protocols, which use infinite-dimensional states. Moreover, one cannot infer any physical property of infinite-dimensional protocols using such qubit based communication costs. In this paper, we provide a framework to understand the growth of physical resources in infinite-dimensional protocols. We focus on optical protocols for the sake of concreteness. The time required for communication and the energy expended during communication are identified as the important physical resources of such protocols. In an optical protocol, the time required for communication is determined by the number of time-bin modes that are transmitted from one party to another. The mean photon number of the messages sent determines the energy required during communication in the protocol. We prove a lower bound on the tradeoff between the growth of these two quantities with the growth of the problem size. We call such tradeoff relations \emph{optical quantum communication complexity relations}. 
\end{abstract}

\pacs{}

\maketitle

\section{Introduction}
\begin{sloppypar}
Communication complexity studies the amount of communication required by two parties in order to compute a particular function $f$ on their private inputs. In classical communication complexity, the amount of communication required during a protocol is quantified by the number of bits the two parties communicate. Analogously in quantum communication complexity, the number of qubits communicated is used for this purpose. The field of quantum communication complexity is interesting both because it offers significant advantages compared to the classical setting \cite{Buhrman98, Buhrman01, Yossef04, Gavinsky07} and also because it allows us to understand the fundamental properties of quantum physics \cite{Cleve97, Brassard06, Shi08}. \\

In qubit based quantum communication protocols, a single qubit is viewed as a \emph{unit} of communication. If one were to use particles with $d$-dimensional quantum states or \emph{qudit} for communication during a protocol, the communication complexity would only be linearly scaled by a factor of $(\log(d))^{-1}$ as compared to a protocol using qubits. Thus, it is sufficient to study qubit based protocols in order to understand the communication complexity of qudit based protocols as well. However, quantum mechanics also allows the parties involved in the communication protocol to send individual particles whose state is described by infinite dimensional Hilbert spaces. In fact the light modes used in most optical implementations, which are one of the most common and easiest ways of implementing quantum communication protocols, have an infinite dimensional Hilbert space associated with them. For these protocols, one can no longer directly use the qubit based communication complexity lower bounds. One way to measure the complexity of these protocols would be to estimate the number of qubits that would be required to approximate the infinite dimensional states so that the error in a protocol implemented using these states is negligibly different from the original protocol \cite{Arrazola14, Xu15, Guan16}. Another way is to instead measure the complexity of these protocols using the amount of information transmitted \cite{Arrazola16, Touchette18, Lovitz18}. Alternatively, one can view the infinite dimensional particles themselves as \emph{units} of communication in these protocols and we can count the number of such particles communicated during the protocol to measure the complexity of these protocols. This would provide us greater insight into the physical resources required for communication during such protocols. In this paper, we use this approach to study optical quantum communication protocols.\\

In order to define the complexity of a protocol this way, one has to constrain these particles according to some measure, otherwise such a complexity measure would be trivial as one can always embed an arbitrarily large Hilbert space in the infinite dimensional Hilbert space of a single particle. In particular, one of the parties can encode her input on the infinite orthonormal basis of the Hilbert space of her particle and send it to the other party, who can use it to compute the function. In time-bin encoded optical protocols, the number of modes (denoted by $m$) used in a protocol determines the duration of time required for communication during the protocol. Further, as mentioned above the Hilbert space associated with each optical mode is infinite dimensional. The mean photon number of the optical messages sent during a protocol (denoted by $\mu$) determine the energy required during the protocol. In general both $m$ and $\mu$ would depend on the problem size $n$. In this paper, we study the tradeoff between the growth of these two quantities with the problem size and we call such tradeoff relations \emph{optical quantum communication complexity} relations of $f$. We will mainly restrict our attention to optical protocols in the simultaneous message passing (SMP) model. However, we also describe how the  results presented here may be translated to other communication models. The main result of our paper can be informally stated as follows. 

\begin{thm}[Informal statement]
If $\Pi$ is an optical quantum communication protocol which computes the function $f$ in the SMP model with error at most $1/3$, then the number of modes, $m$, and the maximum mean number of photons the parties may be required to send to the referee, $\mu$, during $\Pi$ satisfy
\begin{align}
	\min\{\mu \log(m), m\log(1+\mu/ \delta)\} = \Omega( Q^{||}_{1/3}(f))
\end{align}
where $\delta>0$ is a constant and $Q^{||}_{1/3}(f)$ is the (qubit based) SMP quantum communication complexity for protocols computing $f$ with at most $1/3$ probability of error. In particular, this implies 
\begin{align}
	\min\{\mu \log(m), m\log(1+\mu/ \delta)\} = \Omega( \log(D(f)))
\end{align}
where $D(f)$ is the classical deterministic communication complexity of $f$.
\label{thm:MainResult0}
\end{thm}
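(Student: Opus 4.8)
The plan is to reduce the optical protocol to a qubit-based SMP protocol and then invoke the known lower bound $Q^{||}_{1/3}(f)$, so that the proof really amounts to a simulation/encoding argument giving a bound on the effective number of qubits sent. The key tension is that each optical mode lives in an infinite-dimensional Fock space, so I cannot naively count dimensions; I must control dimension using the \emph{mean photon number} constraint $\mu$. The strategy is to bound the effective dimension of the message Hilbert space in two complementary regimes, which is exactly what produces the minimum of two terms in the statement.

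\paragraph*{Setup and the two regimes.}
First I would fix the message sent by each party to the referee. In time-bin encoding the message occupies $m$ modes, so it is a state on $\mathcal{H} = (\ell^2)^{\otimes m}$, with a total mean photon number constraint governed by $\mu$. The whole point is that a state with small mean photon number is, up to small trace-distance error, supported on a low-photon-number subspace. Concretely, I would apply a Markov/Chebyshev-type argument: the probability of observing more than $N$ photons in a state of mean photon number $\mu$ is at most $\mu/N$, so truncating to total photon number at most $N = \mu/\delta$ introduces error at most $\delta$ in trace distance, which can be absorbed into the protocol's error budget (this is why a constant $\delta>0$ appears). After truncation the state lives in the subspace of $(\ell^2)^{\otimes m}$ with at most $N$ total excitations across $m$ modes. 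The dimension of this subspace is $\binom{m+N}{N}$, so the effective qubit cost of transmitting the message is $\log \binom{m+N}{N}$.

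\paragraph*{Estimating the binomial and extracting the minimum.}
Next I would bound $\log\binom{m+N}{N}$ with $N=\mu/\delta$ using standard two-sided estimates of binomial coefficients. When $N$ is small relative to $m$ (few photons, many modes), $\binom{m+N}{N}\lesssim (e m/N)^N$, giving $\log\binom{m+N}{N} = O(N \log m) = O(\mu \log m)$; this is the first term. When $N$ is large relative to $m$ (few modes, many photons), the roles swap and $\binom{m+N}{N}\lesssim (e(m+N)/m)^m$, giving $\log\binom{m+N}{N} = O(m\log(1+N/m)) = O\bigl(m\log(1+\mu/(\delta m))\bigr)$, which is dominated by the stated $m\log(1+\mu/\delta)$; this is the second term. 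Taking whichever bound is smaller in the relevant regime yields $\log\binom{m+N}{N} = O\bigl(\min\{\mu\log m,\; m\log(1+\mu/\delta)\}\bigr)$. Since the truncated state gives a valid qubit-based SMP protocol for $f$ with error at most $1/3$ (original error plus the $\delta$ truncation error, with constants chosen so the total stays below $1/3$), the effective number of qubits is an upper bound (up to constants) on $Q^{||}_{1/3}(f)$. Rearranging gives the first displayed inequality. The second inequality then follows immediately from the standard relation $Q^{||}_{1/3}(f) = \Omega(\log D(f))$ between SMP quantum communication complexity and deterministic communication complexity, which I would simply cite.

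\paragraph*{Main obstacle.}
The delicate step is the truncation and error-accounting: I must argue that projecting each message onto the low-photon subspace and renormalizing perturbs the referee's final acceptance probability by only a controlled amount, and that this perturbation composes correctly across both parties' independent messages in the SMP model. Getting the trace-distance bookkeeping right — so that a \emph{constant} $\delta$ suffices and the resulting protocol still has error bounded away from $1/2$ (here at most $1/3$) — is where the care lies, especially since the two messages must be truncated simultaneously and the referee's measurement is applied to the joint (possibly subnormalized) state. A secondary subtlety is confirming that the subspace dimension $\binom{m+N}{N}$ correctly counts the joint photon budget across all $m$ modes rather than per-mode, and that the time-bin mode structure genuinely corresponds to $m$ uses of a single infinite-dimensional channel so that ``number of modes'' is the right notion of time cost.
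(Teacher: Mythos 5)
Your overall strategy is the same as the paper's: truncate each message onto the subspace of total photon number below $\mu/\delta$ via Markov's inequality, count the dimension of that subspace as $\binom{\lfloor\mu/\delta\rfloor+m}{m}$, bound its logarithm by both $O(\mu\log m)$ and $O(m\log(1+\mu/\delta))$, and then invoke the chain $Q^{||}_{1/3}(f)=\Omega(\log R^{||}_{1/3}(f))=\Omega(\log\sqrt{D(f)})$. (Your binomial estimates via $(ea/b)^b$ with a regime split are essentially equivalent to the paper's Lemma \ref{lemm:CnmBound}, and the relation $Q^{||}_{1/3}(f)=\Omega(\log D(f))$ is indeed the standard Babai--Kimmel bound combined with the quantum-versus-classical SMP bound.) However, two of your error-accounting steps fail as literally stated. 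First, Markov's inequality only gives $\tr(P\rho)\geq 1-\delta$, i.e., probability mass on the truncation subspace; it does not directly give trace distance $\delta$ between $\rho$ and the renormalized truncated state $P\rho P/\tr(P\rho)$. The conversion costs a square root: via Winter's gentle measurement lemma and the Fuchs--van de Graaf inequality (the paper's Lemmas \ref{lem:GentleMeas} and \ref{lem:ProjCloseness}), the half trace distance is bounded by $\sqrt{\delta}$, not $\delta$. Since $\delta$ is a constant this only shifts constants, but the claim as written is false, and you also need the paper's Lemma \ref{lem:SMPTransform} (or an equivalent argument) to show that perturbing both parties' messages simultaneously costs $2\sqrt{\delta}$ in the referee's output error.

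Second, and more substantively, your parenthetical ``with constants chosen so the total stays below $1/3$'' cannot be made to work: the original protocol is only guaranteed error at most $1/3$, so after truncation the simulated protocol has error $1/3+O(\sqrt{\delta})>1/3$ no matter how small the constant $\delta$ is. Its qubit cost therefore lower-bounds $Q^{||}_{1/3+O(\sqrt{\delta})}(f)$, not $Q^{||}_{1/3}(f)$, and the displayed inequality does not yet follow. The missing ingredient --- which the paper supplies as Theorem \ref{th:ConfAmp} (confidence amplification) --- is that $Q^{||}_{\epsilon}(f)=\Theta(Q^{||}_{1/3}(f))$ for every constant $\epsilon<1/2$: one picks $\delta$ (the paper takes $\delta=10^{-4}$) so that $1/3+2\sqrt{\delta}<1/2$ and then transfers the bound back to $Q^{||}_{1/3}(f)$. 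You gesture at ``error bounded away from $1/2$'' in your obstacle paragraph, so the instinct is right, but the amplification step must be invoked explicitly; without it the conclusion in terms of $Q^{||}_{1/3}(f)$ is not established.
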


We introduce the concepts and results required from quantum optics and communication complexity in Section \ref{sec:QuantumOptics}. The theorem above is proven in Section \ref{sec:OptCommComp}. A comparison with a classical analogue of the above result is provided in Section \ref{sec:ClComp}.

\end{sloppypar}

\section{Background}

\subsection{Quantum optics}
\label{sec:QuantumOptics}

We only require a few basic concepts from quantum optics for the purpose of this paper. We cover all of these briefly in this section. To begin, the Hilbert space for a single optical mode, $\mathcal{H} $, is a countably infinite dimensional Hilbert space, also called the Fock space. Formally, we identify this Hilbert space with $ \ell_2$, the set of all square summable sequences. If $\hat{n}$ is the photon number operator on this Hilbert space, then we can let $\{ \ket{k} \}_{k=0}^\infty$ be the eigenvectors of $\hat{n}$. These form an orthonormal basis for $\mathcal{H} $ called the Fock basis. A concrete way to view $\mathcal{H} $ is as 
\begin{align*}
	\mathcal{H} = \left\lbrace \sum_{k=0}^{\infty} x_k \ket{k} : \sum_{k=0}^{\infty} |x_k|^2 < \infty \right\rbrace.
\end{align*}
The photon number operator on this space is given by
\begin{align*}
	\hat{n}:= \sum_{k=0}^\infty k \ket{k}\bra{k}. 
\end{align*}

The total number operator on the Hilbert space of $m$-modes $\mathcal{H}^{\otimes m}$ is given by $\hat{N} := \sum_{i=1}^m \hat{n}_i$, where $\hat{n}_i := I \otimes \cdots \otimes \hat{n} \otimes \cdots \otimes I$ (the number operator acting on the $i^\text{th}$ Hilbert space). Since $\hat{N}$ is Hermitian, it can also be associated with a measurement. Using Eigenvalue decomposition, write $\hat{N}$ as $\hat{N}= \sum_{n=0}^\infty n P_n$, where $P_n$ is the projector onto the $n$-photon subspace, i.e., 
\begin{align*}
	P_n = \sum_{(n_1,\cdots, n_m) \in S_n } \ket{n_1, n_2, \cdots, n_m}\bra{n_1, n_2, \cdots, n_m}
\end{align*}
where $S_n := \{ (n_1, n_2, \cdots, n_m) : \sum_{i=1}^m n_i =n \}$. The measurement corresponding to $\hat{N}$ is the measurement $\{ P_n \}_n$. We will also refer to the random variable corresponding to the measurement result in this basis as $\hat{N}$. Thus, the probability of measuring $n$-photons in the state $\rho$ will be denoted by 
\begin{align*}
	\mathbb{P}\text{r}_\rho [\hat{N} = n] = \tr (P_n \rho).
\end{align*}
The mean number of photons of the state is given by
\begin{align*}
	\mathbb{E}_\rho [\hat{N}]  =  \sum_{n=0}^\infty n \mathbb{P}\text{r}_\rho [\hat{N} = n] =\tr (\hat{N}\rho).
\end{align*}
Finally, we note that the Markov inequality for $\hat{N}$ (viewed as a random variable) implies that
\begin{align}
	\mathbb{P}\text{r}_\rho [\hat{N} \geq a ] \leq \frac{\mathbb{E}_\rho[\hat{N}]}{a}. 
	\label{eq:MarkovIneq}
\end{align}

\subsection{Communication complexity}
\label{sec:CommComp}
\begin{sloppypar}
Communication complexity is the study of the number of bits two parties need to communicate in order to be able to compute a function on their inputs. There are different models of communication one can consider to quantify the communication complexity of a function. In this paper, we will mainly deal with the Simultaneous Message Passing (SMP) model. We will, however, use results connecting the complexity of a function in the SMP model with the two-party deterministic communication complexity of a function. In this section, we describe these settings and the results we use in this paper. We point the reader to the books \cite{Kushilevitz96, Rao19} for a more thorough introduction to this subject.\\

We begin with an overview of classical communication complexity. Consider two parties Alice and Bob who wish to collaborate and compute a function $f: \mathcal{X} \times \mathcal{Y} \rightarrow \{ 0,1 \}$ on their inputs $x \in \mathcal{X}$ and $y \in \mathcal{Y}$ \emph{exactly} using a deterministic protocol $\Pi$. The number of bits they need to communicate with each other for this purpose is called the communication cost of the protocol on inputs $x$ and $y$ and is denoted by $\text{cost}_{\Pi}(x,y)$. The communication cost of the protocol is given by
\begin{align*}
	\text{cost}(\Pi) := \max_{x,y} \{ \text{cost}_{\Pi}(x,y) \}.
\end{align*}
Further, we define the deterministic communication complexity of a function $f$ to be the minimum communication cost for computing $f$. That is, 
\begin{align*}
	D(f) := \min_{\Pi} \text{cost} \{ \Pi \}
\end{align*}
where the minimization is over all deterministic protocols $\Pi$ which compute $f$ exactly. The definitions given above can also be extended to randomized protocols which allow for an error $\epsilon \geq 0$ during the protocol. A randomized protocol $\Pi$ is said to compute a function $f$ with error at most $\epsilon$, if for every pair of inputs $x,y$ we have 
\begin{align}
	\mathbb{P}\text{r}_{\Pi} [\Pi(x,y) \neq f(x,y)] \leq \epsilon.
	\label{eq:ErrorProb}
\end{align}
The communication cost of a randomized protocol is once again defined to be the maximum number of bits which Alice and Bob may be required to communicate during the protocol, and the randomized communication complexity of $f$ is defined as the minimum protocol cost for computing $f$. \\

The Simultaneous Message Passing (SMP) model is a more restricted setting in communication complexity. In the SMP model, there are three parties: Alice, Bob and a Referee. Alice and Bob receive inputs $x$ and $y$, which are only visible to them. Further, throughout this paper we consider the model where Alice and Bob have access to private randomness as well. Alice and Bob both send messages to the Referee, so that he is able to compute $f(x,y)$ with high probability. A protocol is said to compute function $f$ with error at most $\epsilon$ if it satisfies the condition in Eq. \ref{eq:ErrorProb} for every input $x,y$. The communication cost of a SMP protocol $\Pi$ is the maximum number of bits Alice and Bob have to send to the Referee for any input and randomness. The SMP communication complexity of computing a function $f$ with error at most $\epsilon$ denoted by $R_{\epsilon}^{||} (f)$ is the minimum communication cost of any SMP protocol which computes $f$ with error at most $\epsilon$. \\

We can define similar settings in the quantum case as well. In this paper, however, we will only consider quantum communication protocols in the SMP model. The setting of the model is the same as the classical model above. However, now Alice and Bob can send quantum states as messages to the Referee. In the model that we consider in this paper, there are no shared resources between any of the parties. Now the communication cost of the protocol is quantified using the maximum number of qubits sent by Alice and Bob during the protocol. Suppose $\Pi$ is a SMP quantum communication protocol, then we define $\text{cost}^Q_{\Pi}(x,y)$ to be the total number of qubits sent by Alice and Bob to the Referee. As before, we define the quantum communication cost of the protocol as 
\begin{align*}
	\text{cost}_Q(\Pi) := \max_{x,y} \{ \text{cost}^Q_{\Pi}(x,y) \}.
\end{align*}
The quantum SMP communication complexity of computing $f$ with error at most $\epsilon$ denoted by $Q_{\epsilon}^{||}(f)$ is 
\begin{align*}
	Q_{\epsilon}^{||}(f) := \min_\Pi \{ \text{cost}_Q(\Pi) \}
\end{align*}
where the minimization takes place over quantum SMP protocols which compute $f$ with error at most $\epsilon$.\\

We are typically interested in the asymptotic growth of communication complexity with the size of the inputs. In order to study this growth, we suppose that the parties wish to compute a family of functions $\{ f_n : n \in \mathbb{N}\}$ where $f_n: \{0,1\}^n \times \{0,1\}^n \rightarrow \{0,1\}$, that is, each function $f_n$ is defined for inputs of length $n$. The reader should think of this family as a generalization of a function defined for a particular input length to strings of all possible lengths. For example, the family of equality functions Eq$_n: \{0,1\}^n \times \{0,1\}^n \rightarrow \{0,1\}$ is defined as 
\begin{align}
	\text{Eq}_n(x,y) = 
	\begin{cases}
		1 & \text{if } x=y \\
		0 & \text{if } x\neq y
	\end{cases}
	\label{eq:EqualityFn}
\end{align}
and can be regarded as a generalization of the Equality function to all possible input lengths. We collectively refer to the family of functions $\{ f_n \}_n$ as $f$. The communication complexity $C$ of the family of functions $f$ is a function of the input size $n$ defined as $(C(f))(n):=C(f_n)$, where the communication complexity $C$ can be chosen to be any of the communication complexities defined for a function above. Further, $f$ is commonly referred to as a function instead of a family of functions and we say that a protocol computes the \emph{function} $f$ for inputs $x,y \in \{0,1\}^n$ to mean that the protocol computes $f_n(x,y)$ on these inputs. \\

We will now state some well known results in communication complexity, which will be used later on in this paper. It should be noted that all these results are about the asymptotic communication complexity of a family of functions. Theorem \ref{th:ConfAmp} shows that up to multiplicative factors the communication complexity of a family of functions is the same for different errors. Theorem \ref{th:SMPLB} lower bounds the randomized SMP communication complexity of a family of functions in terms of its deterministic SMP communication complexity. Theorem \ref{th:QProtLB}, on the other hand, lower bounds the quantum SMP communication complexity in terms of the classical randomized SMP communication complexity.

\begin{thm}[Confidence Amplification; see for example Ref. \cite{Kushilevitz96}] Consider a family of functions $f$ as defined above and any $0 < \epsilon, \delta<1/2$. Then, in the SMP setting, we have that 
	\begin{align*}
		R_\epsilon^{||} (f) &= O(R_{\delta}^{||}(f) \phi(\epsilon, \delta)) \\
		Q_\epsilon^{||} (f) &= O(Q_{1/3}^{||}(f) \phi(\epsilon, \delta))
	\end{align*}
	\label{th:ConfAmp}
\end{thm}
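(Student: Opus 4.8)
The plan is to prove both bounds by the standard error-reduction-by-repetition technique: take a low-confidence protocol, run many independent copies of it, and have the referee output the majority vote of the individual outcomes. For the classical statement I would begin with an SMP protocol $\Pi_0$ that computes $f$ with error at most $\delta$ and whose cost equals $R_\delta^{||}(f)$. From it I would build a protocol $\Pi$ in which Alice and Bob each draw $t$ independent blocks of private randomness and send the referee the messages they would have sent in $t$ independent executions of $\Pi_0$; the referee computes the $t$ individual outputs and returns their majority. Since there are no shared resources and each copy consumes fresh private randomness, for every fixed input pair $(x,y)$ the $t$ outcomes are i.i.d.\ Bernoulli random variables, each disagreeing with $f(x,y)$ with probability at most $\delta < 1/2$.

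The second step is to bound the failure probability of the majority. For fixed $(x,y)$, the majority is incorrect only when at least $t/2$ of the runs err, and a Hoeffding bound gives that this probability is at most $\exp\!\left(-2t(1/2-\delta)^2\right)$. Choosing $t = \lceil \ln(1/\epsilon) / (2(1/2-\delta)^2) \rceil$ forces this below $\epsilon$ \emph{uniformly} over all inputs, so $\Pi$ computes $f$ with error at most $\epsilon$. As the cost of $\Pi$ is exactly $t$ times that of $\Pi_0$, this yields $R_\epsilon^{||}(f) = O(t\cdot R_\delta^{||}(f))$ with the repetition overhead $\phi(\epsilon,\delta) = \Theta\!\left(\ln(1/\epsilon)/(1/2-\delta)^2\right)$.

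For the quantum bound the argument is structurally identical, with the base error fixed at $1/3$. Here Alice and Bob send $t$ independent copies of their quantum messages and the referee applies the measurement of $\Pi_0$ to each copy separately. Because the copies are prepared from independent private randomness and there is no shared entanglement or randomness, the state the referee holds factors as a tensor product across the $t$ copies; measuring each copy therefore produces, for every fixed $(x,y)$, i.i.d.\ outcomes each erring with probability at most $1/3$. The same Hoeffding estimate with $t = O(\ln(1/\epsilon))$ drives the majority-vote error below $\epsilon$, and multiplying the per-copy qubit count by $t$ gives $Q_\epsilon^{||}(f) = O\!\left(Q_{1/3}^{||}(f)\,\phi(\epsilon,\delta)\right)$.

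I expect the one point that merits care---rather than a genuine obstacle---to be the justification of independence in the quantum setting, since it is exactly what licenses the Chernoff/Hoeffding estimate. I would establish it by observing that the absence of shared resources forces the joint message state to be a tensor product over copies, so that per-copy measurements yield genuinely independent classical bits; any temptation to have the referee perform a cleverer joint measurement is unnecessary, since independent measurement plus majority already suffices.
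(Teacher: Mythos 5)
Your proof is correct and follows the standard repetition-plus-majority-vote argument with a Hoeffding/Chernoff tail bound; the paper itself offers no proof of this theorem (it is stated as a known result with a citation to Kushilevitz--Nisan), and your argument is precisely the one underlying that cited result, including the correct handling of independence in the quantum SMP setting via the tensor-product structure of unentangled, privately randomized messages. The only difference is cosmetic: your repetition count gives $\phi(\epsilon,\delta)=\Theta\bigl(\log(1/\epsilon)/(1/2-\delta)^2\bigr)$ without the $(1-\delta)$ factor appearing in the paper's stated $\phi$, i.e., you prove a slightly stronger bound, which immediately implies the one claimed.
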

where $\phi(\epsilon, \delta)= O(\log(1/\epsilon)/ ((1/2-\delta)^2 (1-\delta)))$ is a function independent of $n$. In other word this theorem implies that $Q_\epsilon^{||} (f) = \Theta(Q_\delta^{||} (f))$ for all $0 < \epsilon, \delta<1/2$.

\begin{thm}[Babai and Kimmel \cite{Babai97}]
The classical SMP communication complexity of a family of functions $f$ as defined above satisfies
	\begin{align*}
		R_{1/3}^{||}(f) = \Omega(\sqrt{D(f)}).
	\end{align*}
	\label{th:SMPLB}
\end{thm}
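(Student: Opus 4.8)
The plan is to prove the stronger statement $R^{||}_{1/3}(f) = \Omega(\sqrt{D^{||}(f)})$, where $D^{||}(f)$ denotes the \emph{deterministic} SMP complexity, and then conclude via the easy inequality $D(f) \le D^{||}(f) + O(1)$ (an SMP protocol is essentially a restriction of an interactive one: Alice sends her message to Bob, who simulates the referee and returns the one output bit). Concretely, I would take any private-coin randomized SMP protocol computing $f$ with error $\le 1/3$ in which Alice sends at most $a$ bits and Bob at most $b$ bits, with $a+b = R^{||}_{1/3}(f)$, and build from it a \emph{deterministic} SMP protocol of cost $O((a+b)^2)$. This quadratic blow-up gives $D^{||}(f) = O\big((R^{||}_{1/3}(f))^2\big)$, which rearranges to the claimed lower bound.

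For the construction, write Alice's randomized message on input $x$ as a distribution $p_x$ over $\{0,1\}^a$, Bob's as a distribution $q_y$ over $\{0,1\}^b$, and let $R(u,v)\in\{0,1\}$ be the referee's deterministic verdict on received messages $u,v$. The acceptance probability is $\alpha(x,y) = \mathbb{E}_{u\sim p_x,\,v\sim q_y}\, R(u,v)$, and the error guarantee means $\alpha(x,y)\ge 2/3$ when $f(x,y)=1$ and $\alpha(x,y)\le 1/3$ otherwise, so $\alpha$ is bounded away from $1/2$. The derandomized protocol has Alice send a fixed batch of $s$ messages $u_1,\dots,u_s$ and Bob a fixed batch of $t$ messages $v_1,\dots,v_t$, and the referee outputs $1$ iff the empirical acceptance $\hat\alpha := \frac{1}{st}\sum_{i,j} R(u_i,v_j)$ exceeds $1/2$. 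These messages cost $sa+tb$ bits in total.

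The key step is choosing the batches deterministically and cheaply. For a fixed input $x$, I would apply the probabilistic method to $s$ i.i.d.\ draws from $p_x$: for each \emph{single} possible Bob-message $v$, Hoeffding's inequality makes $\frac1s\sum_i R(u_i,v)$ concentrate around $\mathbb{E}_{u\sim p_x} R(u,v)$ to within $\gamma$, and a union bound over the only $2^b$ values of $v$ needs just $s=O(b/\gamma^2)$ samples to guarantee accuracy for \emph{all} $v$ at once. Symmetrically Bob takes $t=O(a/\gamma^2)$ samples so that $\frac1t\sum_j R(u,v_j)$ is within $\gamma$ of $\mathbb{E}_{v\sim q_y} R(u,v)$ for all $2^a$ messages $u$. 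A short triangle-inequality argument, applying Alice's guarantee at the points $v_1,\dots,v_t$ and Bob's guarantee at the points $u_1,\dots,u_s$, then yields $|\hat\alpha - \alpha(x,y)| \le 2\gamma$ for the genuine pair $(x,y)$. Taking $\gamma$ a small constant (any $\gamma<1/12$ preserves the gap around $1/2$) keeps $s=O(b)$ and $t=O(a)$, so the total cost is $sa+tb = O(ab) = O((a+b)^2)$.

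The crux, and the step I expect to be the main obstacle, is exactly this derandomization budget. The naive idea of fixing batches that work for all input pairs simultaneously would force a union bound over the $2^{2n}$ pairs, demanding $s,t = \Omega(n)$ samples and destroying the bound (indeed yielding nothing, since $D(f)\le 2n$ trivially). The essential trick is that each party fixes its batch \emph{per input}, and accuracy need only hold against the at most $2^a$ (resp.\ $2^b$) possible messages of the \emph{other} party; this decouples the two players and replaces the dependence on $n$ by a dependence on the message lengths. The remaining points are routine: checking that the $1/6$ gap survives the $2\gamma$ slack (so no confidence amplification is strictly needed, though Theorem \ref{th:ConfAmp} could supply extra margin) and that converting the deterministic SMP protocol into an interactive one costs only one extra bit.
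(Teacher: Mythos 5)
Your proposal is correct, and there is nothing in the paper to compare it against: the paper invokes this theorem as a black-box citation to Babai and Kimmel and gives no proof of its own. What you have written is essentially a faithful reconstruction of Babai and Kimmel's original argument — per-input sampling of message batches via Hoeffding plus a union bound over the \emph{other} party's message space (the decoupling trick you correctly identify as the crux), the resulting quadratic blow-up $D^{||}(f) = O\bigl((R^{||}_{1/3}(f))^2\bigr)$, and the one-bit conversion from deterministic SMP to interactive deterministic communication. The only loose end is your assumption that the referee's verdict $R(u,v)$ is a deterministic bit; if the referee is allowed private coins, simply redefine $R(u,v)\in[0,1]$ as his acceptance probability — Hoeffding applies to bounded variables, and the derandomized referee can compute and threshold the empirical average of this known function exactly, so the argument goes through unchanged.
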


\begin{thm}[{see for example Ref. \cite[Section 2]{Buhrman01}}]
For any family of functions $f$, the quantum and classical SMP communication complexities are related as follows
		\begin{align*}
			Q_{\epsilon}^{||}(f) = \Omega (\log(R_{\epsilon}^{||}(f))).
		\end{align*}
		\label{th:QProtLB}
\end{thm}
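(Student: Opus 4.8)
The plan is to prove the equivalent statement $R_\epsilon^{||}(f) = 2^{O(Q_\epsilon^{||}(f))}$ by classically simulating an optimal quantum protocol; taking logarithms then yields $Q_\epsilon^{||}(f) = \Omega(\log R_\epsilon^{||}(f))$. Let $\Pi_Q$ be a quantum SMP protocol computing $f$ with error $\epsilon$ and cost $q := Q_\epsilon^{||}(f)$. By Theorem \ref{th:ConfAmp} all constant error rates give complexities that agree up to constant factors, so I may first amplify $\Pi_Q$ to error $1/6$ at the cost of only a constant factor in $q$; it then suffices to produce a classical protocol of cost $2^{O(q)}$ and constant error. By the structure of the SMP model, Alice's message is a density operator $\rho_x$ on a space of dimension at most $2^{q_A}$ depending only on $x$ and her private randomness, Bob's message is $\sigma_y$ on dimension at most $2^{q_B}$ with $q_A+q_B\le q$, and the Referee's decision amounts to applying a fixed binary POVM $\{M, I-M\}$ to $\rho_x\otimes\sigma_y$ and outputting the corresponding bit, so that the acceptance probability is $p(x,y)=\tr\bigl(M(\rho_x\otimes\sigma_y)\bigr)$.

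The key step is to discretize the two message spaces. For a small constant $\eta$, I would fix $\eta$-nets $\mathcal{N}_A$ and $\mathcal{N}_B$, in trace distance, of the sets of density operators on the respective message spaces; a standard volumetric estimate gives $\log|\mathcal{N}_A| = O(4^{q_A}\log(1/\eta))$ and similarly for $\mathcal{N}_B$, both of which are $2^{O(q)}$ for constant $\eta$. In the simulating classical protocol, Alice sends the index of a net point $\tilde\rho_x\in\mathcal{N}_A$ within trace distance $\eta$ of $\rho_x$, and Bob sends the index of $\tilde\sigma_y\in\mathcal{N}_B$ near $\sigma_y$; this costs $2^{O(q)}$ bits in total. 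The Referee, who knows the fixed operator $M$, computes $\tilde p(x,y)=\tr\bigl(M(\tilde\rho_x\otimes\tilde\sigma_y)\bigr)$ and outputs $1$ with probability $\tilde p(x,y)$, realizing this bias using a few additional random bits appended to one of the messages.

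It remains to control the induced error. Since $|\tr(M\rho)-\tr(M\sigma)|\le\|\rho-\sigma\|_1$ for any $0\le M\le I$, and since trace distance is subadditive under tensor products, one has $|p(x,y)-\tilde p(x,y)|\le\|\rho_x\otimes\sigma_y-\tilde\rho_x\otimes\tilde\sigma_y\|_1\le\|\rho_x-\tilde\rho_x\|_1+\|\sigma_y-\tilde\sigma_y\|_1\le 2\eta$, uniformly in $x,y$. Choosing $\eta$ small enough that the amplified quantum error $1/6$ plus $2\eta$ stays below $1/3$ gives a classical SMP protocol of cost $2^{O(q)}$ computing $f$ with error at most $1/3$, so $R_{1/3}^{||}(f)\le 2^{O(q)}$. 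A final application of Theorem \ref{th:ConfAmp} with $\delta=1/3$ transfers this to arbitrary constant $\epsilon$, yielding $R_\epsilon^{||}(f)=2^{O(q)}$ and hence $\log R_\epsilon^{||}(f)=O(q)=O(Q_\epsilon^{||}(f))$, which is the claimed bound.

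The main obstacle, and really the only nontrivial ingredient, is the discretization argument: one must verify both that an $\eta$-net of the density operators on $2^{q_A}$ dimensions has cardinality $2^{O(q)}$ for constant $\eta$, and that rounding Alice's and Bob's messages independently to their nets perturbs the joint acceptance probability only additively. The subadditivity estimate above is exactly what secures the second point, decoupling the two independent roundings so that the total error remains a controlled constant; everything else is bookkeeping handled by confidence amplification.
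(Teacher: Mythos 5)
The paper offers no proof of this theorem at all — it simply cites it to Buhrman et al. — and your discretization/simulation argument is exactly the standard proof from that reference; it is correct, including the net-size estimate $\log|\mathcal{N}_A| = O(4^{q_A}\log(1/\eta))$ and the $2\eta$ perturbation bound obtained from $|\tr(M\rho)-\tr(M\sigma)|\le\|\rho-\sigma\|_1$ together with the hybrid/triangle inequality for tensor products. One small simplification worth noting: since after amplification the acceptance probability $p(x,y)$ is at least $5/6$ when $f(x,y)=1$ and at most $1/6$ when $f(x,y)=0$, the referee can simply threshold $\tilde p(x,y)$ at $1/2$, which makes the simulating classical protocol deterministic and sidesteps the need to realize the bias $\tilde p$ with appended random bits (in fact this yields the stronger conclusion $Q^{||}_{\epsilon}(f)=\Omega(\log D^{||}(f))$).
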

\end{sloppypar}

\section{Optical quantum communication complexity}
\label{sec:OptCommComp}

In order to implement a quantum communication protocol optically, the quantum messages sent by the parties to each other must be implemented as optical states in a multimode Fock space. The communication cost of such protocols is infinite according to the definitions given in Section \ref{sec:CommComp}, since the dimension of the Hilbert space is itself infinite. In standard communication complexity, the number of bits or qubits communicated during the course of the protocol is related to the time that would be spent communicating during the protocol if the protocol were to be implemented. Thus, the qubit communication complexity serves as a good way to quantify the cost of the protocol. However, for optical protocols this viewpoint is no longer valid as the Hilbert space of a single mode is itself infinite dimensional. For time-bin encoded optical protocols, the number of time-bin modes transmitted determines the duration of the protocol. On the other hand, the mean photon number of the signals is directly proportional to the mean
energy carried by the signals and hence determines the energy required to create the signals. In this section, we will study the tradeoff between the number of modes and the mean number of photons required to run an optical quantum SMP protocol. We are essentially studying the tradeoff relation between the time required for communication during the protocol and the energy required for communication. As stated earlier, we call tradeoff relations between these two quantities for any protocol computing the function $f$ the \emph{optical quantum SMP communication complexity} relation of $f$.\\

\begin{sloppypar}
We will study the tradeoff between the number of modes and the mean photon number for a family of optical SMP protocols computing a function $f: \{ 0,1 \}^n \times \{ 0,1 \}^n \rightarrow \{ 0, 1 \}$ defined for every $n$, for example the Equality function or the Inner Product function (i.e., $f$ is a family of functions as defined in Sec. \ref{sec:CommComp}). Let $\{ \Pi_n \}_{n=1}^\infty$ be a family of SMP protocols, which computes the function $f(x,y)$ with error at most $1/3$. The exact value of error is not relevant, since our bounds use the classical and quantum communication complexity lower bounds, which are equal up to multiplicative factors for fixed error rates (Theorem \ref{th:ConfAmp}). The protocol $\Pi_n$ can be used to compute the function $f(x,y)$ when $x$ and $y$ are $n$-bit strings. We suppose that these protocols are implemented optically. That is, the states sent by Alice and Bob while running $\Pi_n$ are part of a $m(n)$-mode Hilbert space $\mathcal{H}^{\otimes m(n)}$, where $\mathcal{H}$ is the single mode Fock space. Note that the states used depend on the problem parameter $n$, and hence the number of modes is a function of n, $m= m(n)$. We will call the states sent by Alice and Bob on inputs $x$ and $y$ during protocol $\Pi_n$, $\rho^{(n)}_x$ and $\sigma^{(n)}_y$. Further, we define the maximum mean number of photons $\mu(n)$, which Alice or Bob may have to send during $\Pi_n$ as \footnote{We could have also defined $\mu(n)$ as a maximum over the sum of the mean number of photons of the states sent by both Alice and Bob. This definition would be equivalent to the one given in Eq. \ref{eq:muDef} upto constant factors and hence would not lead to any change in the final results of this paper.}
\begin{multline}
	\mu(n) := \max \bigl\lbrace \lbrace \tr(\hat{N} \rho^{(n)}_x) : x \in \{ 0,1 \}^n \rbrace \cup \\ 
	\lbrace \tr(\hat{N} \sigma^{(n)}_y) : y \in \{ 0,1 \}^n \rbrace \bigr\rbrace.
	\label{eq:muDef}
\end{multline}
For notational convenience, we will drop the explicit dependence of $\mu(n)$ and $m(n)$ on $n$ and denote the number of modes and the maximum mean photon number by $m$ and $\mu$. \\

Our strategy will be to use the fact that the maximum mean photon number is $\mu$ to find a projector $P$, which has high overlap with the states $\rho^{(n)}_x$ and $\sigma^{(n)}_y$ used in the protocol. The rank or the dimension of this projector will be shown to depend only on $m$ and $\mu$. We will use it to transform the given protocol into another protocol, where Alice and Bob send the finite dimensional states $P\rho^{(n)}_x P/\tr(P\rho^{(n)}_x)$ and $P\sigma^{(n)}_y P/\tr(P\sigma^{(n)}_y)$ on inputs $x$ and $y$. This protocol would require the communication of only $O(\log(\rank(P)))$ qubits, which has to satisfy the known lower bounds for the SMP communication complexity of $f$.\\

Consider a fixed value of $n$. For any state $\rho^{(n)} \in \lbrace \rho^{(n)}_x : x \in \{ 0,1 \}^n \rbrace \cup \lbrace \sigma^{(n)}_y : y \in \{ 0,1 \}^n \rbrace$ sent by Alice or Bob during $\Pi_n$ and $\delta>0$ (a parameter which will be chosen later), using the Markov inequality (Eq. \ref{eq:MarkovIneq}) we have that 
\begin{align}
	\nonumber & \mathbb{P}\text{r}_{\rho^{(n)}} [\hat{N} \geq \mu/ \delta ] \leq \delta \frac{\mathbb{E}_{\rho^{(n)}}[\hat{N}]}{\mu} \leq \delta \\
	\Rightarrow \ & \mathbb{P}\text{r}_{\rho^{(n)}} [\hat{N} < \mu / \delta] \geq 1- \delta.
	\label{eq:MarkovProj}
\end{align}
Define, $P := \sum_{(n_1, \cdots, n_m) \in S_{< \mu/ \delta}} \ket{n_1, \cdots, n_m}\bra{n_1, \cdots, n_m}$ where $S_{< \mu/\delta} := \{ (n_1, n_2, \cdots, n_m) : {\sum_{i=1}^m n_i < \mu / \delta \}}$. We can rewrite Eq. \ref{eq:MarkovProj} as
\begin{align}
	\tr(P \rho^{(n)} ) \geq 1- \delta.
\end{align}
Now, using Lemma \ref{lem:ProjCloseness}, we have that for every $x, y$
\begin{align*}
	 & \frac{1}{2}\left\Vert \rho_x^{(n)} - \frac{P \rho_x^{(n)} P}{\tr (P \rho_x^{(n)} P)} \right\Vert_1 \leq \sqrt{\delta} \\
	 & \frac{1}{2} \left\Vert \sigma_y^{(n)} - \frac{P \sigma_y^{(n)} P}{\tr (P \sigma_y^{(n)} P)} \right\Vert_1 \leq \sqrt{\delta}.
\end{align*}
Using Lemma \ref{lem:SMPTransform}, we can create a SMP protocol for $f$ with error at most $1/3 + 2\sqrt{\delta}$, which uses the states $\lbrace P\rho^{(n)}_x P/\tr(P \rho^{(n)}_x): x \in \{ 0,1 \}^n \rbrace \cup \lbrace P\sigma^{(n)}_y P/ \tr(P \sigma^{(n)}_y) : y \in \{ 0,1 \}^n \rbrace$. This protocol requires the communication of only $O(\log(\rank(P)))$ qubits. Further, the rank of the projector $P$ can be estimated as follows. \\

The rank of the projector is equal to $\vert S_{< \mu/\delta} \vert $ which is equal to the number of non-negative integer solutions of the equation 
\begin{align*}
	\sum_{i=1}^m n_i < \frac{\mu}{\delta}.
\end{align*}
If we introduce a slack variable $s$, this is equal to the number of non-negative integer solutions of 
\begin{align*}
	\sum_{i=1}^m n_i +s = \floor{\frac{\mu}{\delta}}
\end{align*}
which is equal to  
\begin{align*}
	{a +m}\choose{m}
\end{align*}
for $a= \floor{{\mu}/{\delta}}$ using standard combinatorics. Further, using Lemma \ref{lemm:CnmBound}, this expression can be bounded by $(1+m )^a$ and $(1+a )^m$. Thus, we have that 
\begin{align}
	\log(\rank(P)) \leq \frac{\mu}{\delta}\log(1+m) \label{eq:ProjDim1}\\
	 \log(\rank(P)) \leq m \log (1+\frac{\mu}{\delta}) \label{eq:ProjDim2}
\end{align}

We can choose $\delta = 10^{-4}$, so that the error of the protocol is strictly smaller than $1/2$. Since, quantum communication communication complexity is asymptotically equivalent up to constant factors for all errors less than $1/2$ (Theorem \ref{th:ConfAmp}), this does not affect the asymptotic bounds we are working towards. Thus, moving forward we can ignore the dependence of the upper bound in Eq. \ref{eq:ProjDim1} on $\delta$. Further, if assume $m \geq 2$ (we can always add an extra mode to the messages if necessary), then we can simplify the bound to 
\begin{align}
	\log(\rank(P)) = O(\mu \log (m)).
\end{align}
Our modified protocol, which uses only finite dimensional quantum states, leads us to the following bound which links the growth of optical resources of a SMP protocol of $f$ with its qubit based SMP communication complexity $Q^{||}_{1/3}(f)$.
\begin{align}
	\min\{\mu \log(m), m\log(1+\mu/ \delta)\} = \Omega( Q^{||}_{1/3}(f))
	\label{eq:SMPBound}
\end{align}

Recall that the SMP communication cost for quantum protocols for function $f$ is lower bounded by $\Omega(\log(R^{||}_{1/3}(f)))$, where $R^{||}_{1/3}(f)$ is the classical SMP communication complexity for computing $f$ with error at most $1/3$ (Theorem \ref{th:QProtLB}). Further, the classical SMP communication complexity is lower bounded by $\Omega (\sqrt{D(f)})$, where $D(f)$ is the deterministic communication complexity of $f$ (Theorem \ref{th:SMPLB}). Thus, the number of qubits used by any quantum protocol is lower bounded by $\Omega(\log(D(f)))$. For any family of optical quantum SMP protocols for $f$, the following optical communication complexity relations hold true:
\begin{align}
	&{\mu}\log(m) = \Omega(\log(D(f)))  \label{eq:GenBound1} \\
	& m \log (1+{\mu}/\delta) = \Omega(\log(D(f))). \label{eq:GenBound2}
\end{align}
We state the results developed in this section so far as Theorem \ref{thm:MainResult}, which is also a formal restatement of Theorem \ref{thm:MainResult0}.

\begin{thm}
If $\Pi$ is a family of optical quantum communication complexity protocols which computes ${f:\{0,1\}^n \times \{0,1\}^n \rightarrow \{ 0,1 \}}$ in the SMP model with error at most $1/3$, then the number of modes $m$ (assuming $m \geq 2$ for the protocol) and the maximum mean number of photons $\mu$ of the protocol $\Pi$ as defined above satisfy
\begin{align}
	\min\{\mu \log(m), m\log(1+\mu/ \delta)\} = \Omega( Q^{||}_{1/3}(f))
	\label{eq:MainResult1}
\end{align}
where $\delta = 10^{-4}$ and $Q^{||}_{1/3}(f)$ is the (qubit based) SMP quantum communication complexity for protocols computing $f$ with at most $1/3$ probability of error. In particular, this implies 
\begin{align}
	\min\{\mu \log(m), m\log(1+\mu/ \delta)\} = \Omega( \log(D(f)))
	\label{eq:MainResult2}
\end{align}
where $D(f)$ is the classical deterministic communication complexity of $f$.
\label{thm:MainResult}
\end{thm}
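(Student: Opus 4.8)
The plan is to reduce any infinite-dimensional optical SMP protocol to an ordinary finite-dimensional qubit SMP protocol whose qubit cost is controlled by $m$ and $\mu$, and then invoke the known qubit-based lower bounds. The bound on the mean photon number is the only handle we have on the infinite-dimensional messages, so the first step is to convert it into a bound on an effective dimension. Fix $n$ and any message $\rho^{(n)}$ sent during $\Pi_n$. Since $\mathbb{E}_{\rho^{(n)}}[\hat N] \leq \mu$, Markov's inequality (Eq.~\ref{eq:MarkovIneq}) gives $\mathbb{P}\text{r}_{\rho^{(n)}}[\hat N \geq \mu/\delta] \leq \delta$, so the projector $P$ onto the total-photon-number-$<\mu/\delta$ subspace $S_{<\mu/\delta}$ satisfies $\tr(P\rho^{(n)}) \geq 1-\delta$ simultaneously for every message of the protocol.

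Next I would replace each message by its projected, renormalised version $P\rho^{(n)}P/\tr(P\rho^{(n)}P)$, which now lives in the finite-dimensional range of $P$. By a gentle-measurement argument (Lemma~\ref{lem:ProjCloseness}), the high overlap $\tr(P\rho^{(n)}) \geq 1-\delta$ forces these states to be $\sqrt\delta$-close in trace distance to the originals. I then feed these nearby states into Lemma~\ref{lem:SMPTransform} to obtain a genuine SMP protocol using only states supported on the range of $P$: since the referee's acceptance probability can change by at most the trace distance of the messages, and both Alice's and Bob's messages are perturbed, the error degrades by at most $2\sqrt\delta$, giving a protocol of error at most $1/3 + 2\sqrt\delta$. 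This protocol transmits states from a space of dimension $\rank(P)$, hence costs only $O(\log \rank(P))$ qubits.

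It remains to estimate $\rank(P) = |S_{<\mu/\delta}|$. Introducing a slack variable turns this into the number of non-negative integer solutions of $\sum_{i=1}^m n_i + s = \lfloor \mu/\delta \rfloor$, a standard stars-and-bars count equal to $\binom{\lfloor\mu/\delta\rfloor + m}{m}$. Bounding this binomial coefficient two ways (Lemma~\ref{lemm:CnmBound}), by $(1+m)^{\lfloor\mu/\delta\rfloor}$ and by $(1+\lfloor\mu/\delta\rfloor)^m$, yields $\log\rank(P) \leq (\mu/\delta)\log(1+m)$ and $\log\rank(P) \leq m\log(1+\mu/\delta)$, so the qubit cost is $O(\min\{\mu\log m,\, m\log(1+\mu/\delta)\})$ once I fix $\delta = 10^{-4}$ and use $m\geq 2$ to absorb the constant $1/\delta$ and replace $\log(1+m)$ by $\log m$. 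Since $\delta$ is a constant and $1/3 + 2\sqrt\delta < 1/2$, Theorem~\ref{th:ConfAmp} lets me absorb the error increase and compare against $Q^{||}_{1/3}(f)$; the transformed protocol must respect this lower bound, which gives Eq.~\ref{eq:MainResult1}. For Eq.~\ref{eq:MainResult2} I would simply chain the reductions $Q^{||}_{1/3}(f) = \Omega(\log R^{||}_{1/3}(f))$ (Theorem~\ref{th:QProtLB}) and $R^{||}_{1/3}(f) = \Omega(\sqrt{D(f)})$ (Theorem~\ref{th:SMPLB}), so that $Q^{||}_{1/3}(f) = \Omega(\log\sqrt{D(f)}) = \Omega(\log D(f))$.

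I expect the delicate step to be the protocol transformation rather than the combinatorics: I must check that the gentle-measurement bound and the message-substitution bound compose correctly across three parties so that the cumulative error stays strictly below $1/2$, since it is precisely this margin that licenses the constant-error equivalence of Theorem~\ref{th:ConfAmp}. The counting and the binomial estimates are routine; the conceptual content is that a \emph{finite} mean photon number already caps the usable dimension, and that imposing this cap only mildly disturbs the protocol.
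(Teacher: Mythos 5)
Your proposal is correct and follows essentially the same route as the paper's own proof: Markov's inequality to get the photon-number-cutoff projector $P$, Lemma~\ref{lem:ProjCloseness} and Lemma~\ref{lem:SMPTransform} to pass to a finite-dimensional protocol with error $1/3 + 2\sqrt{\delta}$, the stars-and-bars count with Lemma~\ref{lemm:CnmBound} to bound $\log\rank(P)$, and finally Theorems~\ref{th:ConfAmp}, \ref{th:QProtLB} and \ref{th:SMPLB} to conclude. The error-composition step you flag as delicate is handled in the paper exactly as you describe, via the triangle inequality inside Lemma~\ref{lem:SMPTransform} with the fixed constant $\delta = 10^{-4}$.
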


As an example, let us consider optical quantum SMP protocols for the Equality function (defined in Eq. \ref{eq:EqualityFn}). Protocols to compute the equality function are also called fingerprinting protocols. The deterministic communication complexity for the Equality problem is $\Theta(n)$ (see for example \cite[Theorem 1.15]{Rao19}). If the maximum mean number of photons for a family of quantum fingerprinting protocols is constant or bounded, as is the case with Arrazola and L\"utkenhaus' coherent state quantum fingerprinting (QFP) protocol \cite{Arrazola14}, then we have that
\begin{align*}
	\log(m) = \Omega(\log(n)).
\end{align*}
These bounds show that in a \emph{weak} sense the QFP protocol given by Arrazola and L\"utkenhaus is optimal. We use the phrase \emph{weak} because these bounds do not rule out the possibility of a family of optical protocols with constant mean photon number and sublinear growth of $m$ in $n$, as compared to Arrazola and L\"utkenhaus' protocol where $m =\Theta(n)$. \\

We can extend this method to lower bound the growth of the mean photon number and the number of modes in the messages sent during one-way communication protocols. For protocols in this model too, the bound will be similar to the one obtained in Eq. \ref{eq:MainResult1}. One can also use the method described above to derive optical quantum communication complexity relations for interactive two-way communication protocols from the corresponding lower bounds on the qubit based communication complexity\footnote{The average mean number of photons and the number of modes have to be defined appropriately in the case of interactive two-way communication protocols}. However, for these protocols, the number of rounds of the protocol is also a part of the bounds and as a result these bounds are much weaker. 

\end{sloppypar}

\section{Comparison to bounds for classical optical communication complexity relations}
\label{sec:ClComp}

In this section, we will try to compare the optical quantum communication complexity relations with similar relations for classical optical protocols in the SMP model. In quantum optics, the line between "classical" and "quantum" states is extremely blurry. For example, in quantum optics coherent states are usually viewed as classical states of light \cite{Gerry04, Hillery85, Hudson74}, but in a communication complexity setup even such states can provide a tremendous advantage when compared to classical protocols, which use only orthogonal messages. Arrazola and L\"utkenhaus' coherent state quantum fingerprinting protocol provides an example of this advantage. For the sake of simplicity, we will call an optical state, which is diagonal in the Fock basis, a "classical" state in the following\footnote{It should be noted that this description is equivalent to describing "classical" messages using a $m$-tuple $(n_1, n_2, \cdots, n_m)$, where $n_i$ is an integer, which denotes the "power level" of the signal in the $i^{\text{th}}$ mode. The modes can be viewed as time- bin modes, and the power level of the signal $n_i$ as the ratio of the power observed by the detector used during the protocol and its least count. For example, suppose that a protocol uses a detector, which is able to measure the power of a signal in steps of 0.1 W, then if the power measured in the $i^{\text{th}}$ time-bin is 10 W during the protocol, $n_i$ would be 100 $(=10 W/ 0.1 W)$. In this equivalent representation, $\mu$ would represent the maximum average total power of a message that might be sent during the protocol.\label{fn:clModel}}. Now, suppose Alice and Bob are restricted to using these classical states as messages and asked to compute a function $f$ in the SMP model. Once again we denote the number of modes used during the protocol using $m$ and the maximum mean number of photons using $\mu$. Following the arguments of the previous section, we can modify the states $\rho_x$ and $\sigma_y$ used by Alice and Bob, so that the support of the modified states $\rho'_x$ and $\sigma'_y$ lies in the subspace spanned by ${\{ \ket{ n_1, n_2, \cdots, n_m} : {\sum_{i=1}^m n_i < \mu / \delta \}}}$ for some $\delta \in (0,1)$, which will be determined later. We can once again do this in such a way that the additional error is small. This implies that in the modified protocol, on an input $x$, Alice chooses a pure state $\ket{ n_1, n_2, \cdots, n_m}$ such that $\sum_{i=1}^m n_i < \mu / \delta$ with probability $P_x (n_1, n_2, \cdots, n_m)$ to send to the Referee. Similarly, we can think of Bob also randomly choosing such pure states to send to the Referee. As we showed in the previous section, the number of such pure states is 
\begin{align*}
	{a +m}\choose{m}
\end{align*}
for $a= \floor{{\mu}/{\delta}}$. We can transform this classical optical protocol into a standard classical communication complexity protocol, where all the messages are binary, using $\log ({{a +m}\choose{m}})$ bit messages. Now, we can use the classical SMP lower bound (Theorem \ref{th:SMPLB}) on this protocol to get 
\begin{align}
	&\log \left( {{a +m}\choose{m}} \right)  =  \Omega(R_{1/3}^{||}(f) )= \Omega(\sqrt{D(f)}) \label{eq:ClOptbd1} \\
	\Rightarrow &\min\{\mu \log(m), m\log(1+\mu/ \delta)\}  =  \Omega(\sqrt{D(f)}) \label{eq:ClOptbd2}
\end{align}
for say $\delta = 10^{-4}$. We can see that these classical bounds are exponentially stronger than their quantum counterpart (Eq. \ref{eq:MainResult2}). This is simply because we were able to lower bound the classical optical communication complexity using the classical randomized SMP communication complexity instead of the qubit based SMP communication complexity. Further, we note that this classical optical communication complexity relation is tight. If we trivially implement the optimal classical communication complexity protocol for fingerprinting given by Ambainis \cite{Ambainis96}, which has a communication complexity of $\Theta(n)$, in the optical regime using classical optical states (that is, if a party wishes to send the binary message $x_1 x_2 \cdots x_m $, then in the optical protocol they send $\ket{x_1 x_2 \cdots x_m }$), then the number of modes $m=\Theta(\sqrt{n})$ and the maximum mean number of photons $\mu=O(\sqrt{n})$. For these values, using a standard bound for binomial coefficients (See for example Ref. \cite{Cover06}[Example 11.1.3]) we have
\begin{align*}
	&{{a +m}\choose{m}} \leq 2^{(a+m)h(\frac{m}{a+m})} \\
\end{align*}
where $h(.)$ is the binary entropy. This implies that
\begin{align*}
	\log \left( {{a +m}\choose{m}} \right)  \leq O(\sqrt{n})
\end{align*}
which matches the lower bound provided by Eq. \ref{eq:ClOptbd1}. \\

We would like to point out that Fock states and some of their mixtures are considered highly non-classical in quantum optics (for example they have negative Wigner functions). We use them here for our description of classical protocols because they are orthogonal states and from an operational and mathematical point of view the mixtures of orthogonal states behave classically. Moreover, as we point out in Footnote \ref{fn:clModel}, this model can be used to describe a large class of classical models, which are allowed to send signals of unbounded power. It might be more useful to study the optical communication complexity protocols restricted to coherent states or Gaussian states to get a better understanding of the difference between classical and quantum in this regime.  

\section{Conclusion}

In this paper, we adapt the concept of communication complexity to understand the growth of physical resources for optical protocols. We demonstrate simple lower bounds on the growth of the mean number of photons and the number of modes required to implement optical SMP protocols. As motivated in the Introduction, the communication complexity of optical protocols needs to be studied separately as these protocols do not fit the model used by qubit based quantum communication complexity. These relations are a true analogue of classical communication complexity for optical protocols, as we can infer lower bounds on the time required for communication during a protocol from these bounds. Moreover, these relations are important from a practical point of view, since a lionshare of communication protocols are implemented optically \cite{Xu15, Guan16, Horn05, Kumar19}. Optical communication complexity relations are important to understand the limits of the optical implementations of such protocols (also see \cite{Marwah19}). Further work in this area may also help us develop better optical protocols. \\

This paper leaves several questions open for future work. Firstly, it is also not clear at this point if the bounds obtained in this paper are the tightest possible lower bounds for these resources. If indeed these are the tightest bounds it would be interesting to show this using an example. Further, it would also be interesting to see if one can come up with tighter tradeoff bounds for communication protocols which use only coherent states or Gaussian states, since these are the simplest states to experimentally implement. \\

\section*{Acknowledgment}

We would like to thank Norbert L\"utkenhaus for stimulating questions and discussions regarding the obstacles in the implementations of quantum communication protocols. We also thank him for his comments on the manuscript. We would also like to thank Alan Migdall and Joshua Bienfang for sharing their experience on implementations of quantum communication protocols. The work has been performed at the Institute for Quantum Computing, University of Waterloo, which is supported by Industry Canada. The research has been supported by NSERC under the Discovery Program, grant number 341495 and by the ARL CDQI program.

\appendix

\section{Lemmas required for the proof of the Main Result}

\begin{lemma}[Winter's gentle measurement lemma {\cite[Corollary 3.15]{Watrous18}}] 
	Let $\mathcal{H}$ be a Hilbert space, $\rho \in D(\mathcal{H})$ be a density operators and $P \in \text{Pos}(\mathcal{H})$ a positive operator satisfying $P \leq \mathds{1}$ and $\tr (P \rho) >0$. Then, we have 
	\begin{align*}
		F \left(\rho, \frac{\sqrt{P} \rho \sqrt{P}}{\tr(P \rho)} \right) \geq \sqrt{\tr(P \rho)}.
	\end{align*}	 
	\label{lem:GentleMeas}
\end{lemma}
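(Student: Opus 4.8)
The plan is to reduce the statement to Uhlmann's theorem, which characterizes the fidelity $F(\rho,\sigma) = \Vert\sqrt{\rho}\sqrt{\sigma}\Vert_1$ as the maximum overlap $|\braket{\psi|\phi}|$ taken over all purifications $\ket{\psi}$ of $\rho$ and $\ket{\phi}$ of $\sigma$ on a common extended space $\mathcal{H}\otimes\mathcal{H}'$. Since any single pair of purifications furnishes a lower bound on this maximum, it suffices to exhibit one convenient pair. Throughout, write $q := \tr(P\rho)>0$ and $\sigma := \sqrt{P}\rho\sqrt{P}/q$ for the (normalized) post-measurement state.

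First I would fix an arbitrary purification $\ket{\psi} \in \mathcal{H}\otimes\mathcal{H}'$ of $\rho$. The key observation is that applying $\sqrt{P}\otimes\mathds{1}$ to $\ket{\psi}$ and renormalizing produces a purification of $\sigma$: indeed $\tfrac{1}{\sqrt{q}}(\sqrt{P}\otimes\mathds{1})\ket{\psi}$ has squared norm $\tfrac{1}{q}\bra{\psi}(P\otimes\mathds{1})\ket{\psi} = \tr(P\rho)/q = 1$, and tracing out $\mathcal{H}'$ returns $\tfrac{1}{q}\sqrt{P}\rho\sqrt{P} = \sigma$. Thus these two purifications are ``aligned'' by the operator $\sqrt{P}$, which is exactly what makes the bound clean.

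Next I would lower-bound the fidelity by the overlap of these two specific purifications,
\begin{align*}
F(\rho,\sigma) \geq \frac{1}{\sqrt{q}}\left|\bra{\psi}(\sqrt{P}\otimes\mathds{1})\ket{\psi}\right| = \frac{\tr(\sqrt{P}\rho)}{\sqrt{q}},
\end{align*}
where the equality just evaluates the reduced expectation on $\mathcal{H}$. The final step is the operator inequality $\sqrt{P}\geq P$, which holds because $0 \leq P \leq \mathds{1}$ forces every eigenvalue $\lambda\in[0,1]$ to satisfy $\sqrt{\lambda}\geq\lambda$; pairing $\sqrt{P}-P\geq 0$ against $\rho\geq 0$ then gives $\tr(\sqrt{P}\rho)\geq\tr(P\rho)=q$. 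Combining, $F(\rho,\sigma)\geq q/\sqrt{q}=\sqrt{q}=\sqrt{\tr(P\rho)}$, as desired.

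The only real subtlety — and the step I would watch most carefully — is that $P$ is a general positive operator with $P\leq\mathds{1}$ rather than a projector, so one cannot use $\sqrt{P}=P$; the whole argument hinges on the operator monotonicity fact $\sqrt{P}\geq P$ on $[0,\mathds{1}]$, together with recognizing that the purification related to $\ket{\psi}$ by $\sqrt{P}$ (and not by $P$) is the correct object to compare against. An alternative route avoiding purifications would start from $F(\rho,\sigma)\geq\tr(\sqrt{\rho}\sqrt{\sigma})$, but tracking the operator square root $\sqrt{\sqrt{P}\rho\sqrt{P}}$ directly is messier, so I would prefer the Uhlmann argument above.
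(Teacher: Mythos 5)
Your proposal is correct. Note, however, that the paper itself gives no proof of this lemma at all: it is stated as an imported result, cited to Watrous's book (Corollary 3.15), and used as a black box in the proof of Lemma \ref{lem:ProjCloseness}. So the comparison here is between a citation and a self-contained argument. Your argument is the standard Uhlmann-based proof and it is sound at every step: the vector $\tfrac{1}{\sqrt{q}}(\sqrt{P}\otimes\mathds{1})\ket{\psi}$ is indeed a unit-norm purification of $\sigma = \sqrt{P}\rho\sqrt{P}/q$ because $\sqrt{P}$ acts only on $\mathcal{H}$; the overlap evaluates to $\tr(\sqrt{P}\rho)/\sqrt{q}$, which is real and nonnegative; and the operator inequality $\sqrt{P}\geq P$ (valid precisely because the spectrum of $P$ lies in $[0,1]$, by functional calculus) combined with $\tr[(\sqrt{P}-P)\rho]\geq 0$ closes the bound. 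Two points worth flagging: first, your proof correctly uses only the "easy" direction of Uhlmann's theorem (any specific pair of purifications lower-bounds the fidelity), so you never need to argue the maximum is attained; second, your argument is dimension-agnostic, which matters in this paper since $\rho$ lives on the infinite-dimensional multimode Fock space $\mathcal{H}^{\otimes m}$ --- both purifications and Uhlmann's theorem remain valid for density operators on separable Hilbert spaces, so the lemma applies as used. Your instinct to avoid manipulating $\sqrt{\sqrt{P}\rho\sqrt{P}}$ directly is also the right call; that route requires a polar-decomposition identity and is indeed messier.
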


\begin{lemma} Let $P$ be a projector and $\rho \in D(\mathcal{H})$ be a density matrix in the Hilbert space $\mathcal{H}$, such that $\tr (P \rho ) \geq 1 - \delta$ for $\delta \in (0,1)$. Then, we have that 
\begin{align}
	\frac{1}{2}\left\Vert \rho - \frac{P \rho P}{\tr (P \rho P)} \right\Vert_1 \leq \sqrt{\delta}
\end{align}
\label{lem:ProjCloseness}
\end{lemma}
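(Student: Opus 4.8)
The plan is to reduce this statement directly to Winter's gentle measurement lemma (Lemma \ref{lem:GentleMeas}) combined with a standard inequality relating fidelity and trace distance. The observation that makes this work is that $P$ is a projector, so $\sqrt{P} = P$ and $P^2 = P$; consequently $\tr(P\rho) = \tr(P^2 \rho) = \tr(P\rho P)$ and $\sqrt{P}\rho\sqrt{P} = P\rho P$. Hence the operator $P\rho P/\tr(P\rho P)$ appearing in the claim is exactly the normalized post-measurement state $\sqrt{P}\rho\sqrt{P}/\tr(P\rho)$ to which Lemma \ref{lem:GentleMeas} applies. Note also that the hypothesis $\tr(P\rho) \geq 1-\delta$ with $\delta \in (0,1)$ guarantees $\tr(P\rho) > 0$, so the lemma is applicable.

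First I would invoke Lemma \ref{lem:GentleMeas} with this projector $P$, which yields
\begin{align*}
    F\left(\rho, \frac{P\rho P}{\tr(P\rho P)}\right) \geq \sqrt{\tr(P\rho)} \geq \sqrt{1-\delta},
\end{align*}
where the second inequality uses the hypothesis $\tr(P\rho) \geq 1-\delta$ together with monotonicity of the square root. Next I would convert this fidelity bound into a trace-distance bound via the Fuchs--van de Graaf inequality $\tfrac{1}{2}\Vert \rho - \sigma \Vert_1 \leq \sqrt{1 - F(\rho,\sigma)^2}$. Squaring the displayed bound gives $F^2 \geq 1-\delta$, hence $1 - F^2 \leq \delta$, and substituting yields
\begin{align*}
    \frac{1}{2}\left\Vert \rho - \frac{P\rho P}{\tr(P\rho P)} \right\Vert_1 \leq \sqrt{1 - F\left(\rho, \frac{P\rho P}{\tr(P\rho P)}\right)^2} \leq \sqrt{\delta},
\end{align*}
which is precisely the claim.

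The proof itself is short; the only genuine subtlety is bookkeeping around the fidelity convention. I must ensure that the $F$ appearing in Lemma \ref{lem:GentleMeas} is the root fidelity $F(\rho,\sigma) = \Vert \sqrt{\rho}\sqrt{\sigma}\Vert_1$, so that the form of Fuchs--van de Graaf I apply (with $F^2$ inside the square root) is stated consistently with it. Since the cited source for Lemma \ref{lem:GentleMeas} uses exactly this convention, the two inequalities chain together cleanly. Were a squared-fidelity convention in force instead, the exponents in both the gentle measurement bound and the Fuchs--van de Graaf inequality would have to be adjusted in tandem, though the final $\sqrt{\delta}$ bound would be unchanged; this is the one place I would verify against the reference before finalizing.
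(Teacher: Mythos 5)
Your proposal is correct and follows exactly the paper's own route: apply Winter's gentle measurement lemma (Lemma \ref{lem:GentleMeas}) to get $F\bigl(\rho, P\rho P/\tr(P\rho)\bigr) \geq \sqrt{1-\delta}$, then convert to trace distance via Fuchs--van de Graaf. Your explicit remarks that $\sqrt{P}=P$ and $\tr(P\rho P)=\tr(P\rho)$ for a projector, and about the root-fidelity convention, are details the paper leaves implicit, but the argument is the same.
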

\begin{proof}
	Let us define $\sigma = \frac{{P} \rho {P}}{\tr(P \rho)}$. As a result of Lemma \ref{lem:GentleMeas}, we have 
	\begin{align*}
		F \left(\rho, \sigma \right) \geq \sqrt{\tr(P \rho)} \geq \sqrt{1 - \delta}.
	\end{align*}
	Using the Fuchs-van de Graaf inequality, the trace distance
	\begin{align*}
		\left\Vert \rho - \sigma \right\Vert_1 & \leq 2\sqrt{1- F(\rho, \sigma)^2} \\
		& \leq 2 \sqrt{1- (1-\delta)} \\
		& \leq 2  \sqrt{\delta} 
	\end{align*}
\end{proof}

\begin{lemma}
	Suppose in a quantum simultaneous message passing (SMP) protocol to compute the function $f$ with error at most $\epsilon$, Alice and Bob send the quantum states $\rho_x$ and $\sigma_y$ on inputs $x$ and $y$. If $\rho_x^\prime$ and $\sigma_y^\prime$ are quantum states such that $1/2 \Vert \rho_x - \rho^\prime_x \Vert_1 \leq \delta$ and $1/2 \Vert \sigma_y - \sigma^\prime_y \Vert_1 \leq \delta$ for all $x$ and $y$, then the states used in the actual protocol can be replaced by these to create a SMP protocol with error at most $\epsilon + 2 \delta$.
	\label{lem:SMPTransform}
\end{lemma}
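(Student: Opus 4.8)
The plan is to track how the Referee's decision is affected when the true messages $\rho_x \otimes \sigma_y$ are swapped for the nearby messages $\rho_x^\prime \otimes \sigma_y^\prime$, exploiting the fact that trace distance can only decrease under the fixed quantum operation the Referee performs. First I would model the Referee's strategy abstractly: on receiving a joint state on Alice's and Bob's registers, the Referee applies some measurement (a fixed quantum-to-classical channel $\mathcal{M}$) whose outcome is declared as the protocol's answer. Crucially, this measurement depends only on the protocol description and not on which states were actually sent, so it is the \emph{same} map $\mathcal{M}$ in both the original and the modified protocol.

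The key quantitative step is to bound the trace distance between the two joint messages. Writing
\begin{align*}
\rho_x \otimes \sigma_y - \rho_x^\prime \otimes \sigma_y^\prime = (\rho_x - \rho_x^\prime) \otimes \sigma_y + \rho_x^\prime \otimes (\sigma_y - \sigma_y^\prime),
\end{align*}
the triangle inequality, together with multiplicativity of the trace norm under tensor products and $\Vert \sigma_y \Vert_1 = \Vert \rho_x^\prime \Vert_1 = 1$, gives
\begin{align*}
\frac{1}{2}\left\Vert \rho_x \otimes \sigma_y - \rho_x^\prime \otimes \sigma_y^\prime \right\Vert_1 \leq \frac{1}{2}\left\Vert \rho_x - \rho_x^\prime \right\Vert_1 + \frac{1}{2}\left\Vert \sigma_y - \sigma_y^\prime \right\Vert_1 \leq 2\delta.
\end{align*}

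Next I would invoke monotonicity (data-processing) of the trace distance under the channel $\mathcal{M}$: the classical output distributions $p_{x,y}$ and $p_{x,y}^\prime$ obtained from $\mathcal{M}(\rho_x \otimes \sigma_y)$ and $\mathcal{M}(\rho_x^\prime \otimes \sigma_y^\prime)$ then satisfy $\frac{1}{2}\Vert p_{x,y} - p_{x,y}^\prime \Vert_1 \leq 2\delta$. Since the total variation distance upper bounds the change in probability of any event, in particular of the event that the declared answer differs from $f(x,y)$, the modified protocol $\Pi^\prime$ obeys
\begin{align*}
\mathbb{P}\text{r}_{\Pi^\prime}[\Pi^\prime(x,y) \neq f(x,y)] \leq \mathbb{P}\text{r}_{\Pi}[\Pi(x,y) \neq f(x,y)] + 2\delta \leq \epsilon + 2\delta
\end{align*}
for every input pair $(x,y)$, which is exactly the claimed bound.

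I do not expect any serious obstacle here, as the argument is essentially an application of the contractivity of trace distance under quantum channels. The one point requiring mild care is the tensor-product step: because Alice and Bob prepare their messages independently and there are no shared resources in this model, the joint message genuinely factorizes as $\rho_x \otimes \sigma_y$, which is what licenses the telescoping decomposition above and yields the clean additive $2\delta$ error. Were the model to allow shared entanglement or correlations between the two messages, this product structure would break and the bound would have to be re-derived.
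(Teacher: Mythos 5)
Your proof is correct and follows essentially the same route as the paper's: a telescoping (hybrid) decomposition through the intermediate product state $\rho_x^\prime \otimes \sigma_y$, multiplicativity of the trace norm under tensor products, and contractivity of trace distance under the Referee's fixed quantum-to-classical map. The only cosmetic difference is that you telescope on the input states and apply data processing once, whereas the paper applies the Referee's map $\Phi_{\text{ref}}$ first and telescopes on its outputs, bounding each term by $\delta$ using $\Vert \Phi \Vert_1 \leq 1$.
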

\begin{proof}
Suppose that on inputs $x$ and $y$, Alice and Bob send the quantum state $\rho_x$ and $\sigma_y$ to the referee, who applies $\Phi_{\text{ref}}$ (quantum-classical CPTP map) to the joint state to compute $f(x,y)$. For such a protocol, we have that for every $x, y$
\begin{align*}
	\frac{1}{2}\Vert \Phi_{\text{ref}} (\rho_x \otimes \sigma_y) - \ket{f(x,y)} \bra{f(x,y)} \Vert_1 \leq \epsilon,
\end{align*}
which is equivalent to saying that for inputs $x$ and $y$ the error probability is less than $\epsilon$. Now, if we replace the states used by Alice and Bob by $\rho_x^\prime$ and $\sigma_y^\prime$ such that $1/2 \Vert \rho_x - \rho^\prime_x \Vert_1 \leq \delta$ and $1/2 \Vert \sigma_y - \sigma^\prime_y \Vert_1 \leq \delta$ for all $x$ and $y$, then we have that for every $x$ and $y$
\begin{align*}
	\frac{1}{2}\Vert \Phi_{\text{ref}} & (\rho^\prime_x \otimes \sigma^\prime_y) - \ket{f(x,y)} \bra{f(x,y)} \Vert_1 \\
	& \leq \frac{1}{2}\Vert \Phi_{\text{ref}} (\rho_x \otimes \sigma_y) - \ket{f(x,y)} \bra{f(x,y)} \Vert_1 \\
	& + \frac{1}{2}\Vert \Phi_{\text{ref}} (\rho_x \otimes \sigma_y) - \Phi_{\text{ref}} (\rho^\prime_x \otimes \sigma^\prime_y) \Vert_1 \\
	& \leq \epsilon +  \frac{1}{2}\Vert \Phi_{\text{ref}} (\rho_x \otimes \sigma_y) - \Phi_{\text{ref}} (\rho^\prime_x \otimes \sigma_y) \Vert_1 \\
	& +  \frac{1}{2}\Vert \Phi_{\text{ref}} (\rho^\prime_x \otimes \sigma_y) - \Phi_{\text{ref}} (\rho^\prime_x \otimes \sigma^\prime_y) \Vert_1 \\
	& \leq \epsilon + 2 \delta
\end{align*}
where we have used the fact that for all $\rho \in D(\mathcal{H})$, $\Vert \rho \Vert_1=1$ and for all CPTP maps $\Phi$, $\Vert \Phi \Vert_1 \leq 1$ {\cite[Corollary 3.40]{Watrous18}}.
\end{proof}

\begin{lemma}
For $n, m \in \mathbb{N}$ we have 
	\begin{align*}
		{{n +m}\choose{m}} \leq \min \lbrace (1+m )^n , (1+n )^m  \rbrace.
	\end{align*}
	\label{lemm:CnmBound}
\end{lemma}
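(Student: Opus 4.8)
The plan is to prove the two bounds $\binom{n+m}{m} \le (1+n)^m$ and $\binom{n+m}{m} \le (1+m)^n$ separately, and then to exploit symmetry. Since $\binom{n+m}{m} = \binom{n+m}{n}$ is invariant under exchanging $n$ and $m$, while the two candidate upper bounds are simply swapped by the same exchange, it suffices to establish only one of the two inequalities; the other then follows immediately by relabeling.

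To prove $\binom{n+m}{m} \le (1+n)^m$, I would first rewrite the binomial coefficient as a product of $m$ factors,
\begin{align*}
    \binom{n+m}{m} = \frac{(n+1)(n+2)\cdots(n+m)}{m!} = \prod_{k=1}^{m} \frac{n+k}{k}.
\end{align*}
The key step is then a factor-by-factor estimate: for each integer $k \ge 1$ we have $\frac{n+k}{k} = 1 + \frac{n}{k} \le 1 + n$, using $\frac{n}{k} \le n$. Multiplying these $m$ inequalities gives
\begin{align*}
    \binom{n+m}{m} = \prod_{k=1}^{m} \frac{n+k}{k} \le \prod_{k=1}^{m} (1+n) = (1+n)^m,
\end{align*}
as desired. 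The symmetric computation, expanding instead as $\binom{n+m}{n} = \prod_{k=1}^{n} \frac{m+k}{k}$ and bounding each factor by $1+m$, yields $\binom{n+m}{m} \le (1+m)^n$; taking the minimum of the two bounds completes the argument.

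I do not expect any substantial obstacle, as this is an elementary inequality whose proof is essentially a one-line product estimate. The only point to check is the boundary behavior when $n$ or $m$ equals $0$ (should $\mathbb{N}$ be taken to include $0$): there the relevant product is empty, $\binom{n+m}{m}$ equals $1$, and the bound holds with equality, so no separate case analysis is needed. An alternative but slightly more cumbersome route would combine the Vandermonde identity $\binom{n+m}{m} = \sum_{k} \binom{m}{k}\binom{n}{k}$ with the crude estimate $\binom{n}{k} \le n^k$ and the binomial theorem $(1+n)^m = \sum_{k} \binom{m}{k} n^k$; this works but introduces summation bookkeeping that the telescoping product avoids, so I would favor the product approach.
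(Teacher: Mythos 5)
Your proposal is correct and follows essentially the same route as the paper: both write the binomial coefficient as a product of factors $\frac{n+k}{k} \leq 1+n$ (equivalently $\frac{m+k}{k} \leq 1+m$), bound the product term by term, and obtain the second inequality from the symmetry $\binom{n+m}{m} = \binom{n+m}{n}$. The only difference is which of the two bounds you prove directly and which you get by relabeling, which is immaterial.
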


\begin{proof}
Observe that
	\begin{align*}
	{n +m}\choose{m} &= \frac{(n+m)!}{m!\ n!} \\
	& = \frac{m+n}{n} \cdot \frac{m+n-1}{n-1} \cdots \frac{m+1}{1} \\
	& \leq (1+m )^n.
\end{align*}
Since, ${n +m}\choose{m}$$=$${n +m}\choose{n}$ we also have
\begin{align*}
	{{n +m}\choose{m}} \leq (1+n )^m. 
\end{align*}
The bound follows by taking the minimum of these two upper bounds. 
\end{proof}

\bibliographystyle{unsrt}
\bibliography{bib}

\end{document}